\newtheorem{thm}{Theorem}
\newtheorem{lemma}{Lemma}
\newtheorem{problem}{Problem}
\newtheorem{definition}{Definition}
\tikzstyle{block} = [draw, rectangle, minimum size=2em]
\theoremstyle{remark}
\newtheorem{remark}{Remark}
\newtheorem{assumption}{Assumption}
\tikzset{>=latex}
\theoremstyle{exampstyle} 
\author{Bo~Wu, Zhiyu~Liu and Hai~Lin
	\thanks{The partial support of the National Science Foundation (Grant No. CNS-1446288, ECCS-1253488, IIS-1724070) and of the Army Research Laboratory (Grant No. W911NF- 17-1-0072) is gratefully acknowledged.}
	\thanks{ Bo Wu, Zhiyu Liu and Hai Lin are with the Department of Electrical Engineering, University of Notre Dame, Notre Dame,
		IN, 46556 USA. {\tt\small bwu3@nd.edu, zliu9@nd.edu, hlin1@nd.edu}}}
\begin{document}
	
	\title{\Large \bf Parameter and Insertion Function Co-synthesis for Opacity Enhancement  in Parametric Stochastic Discrete Event Systems}

	\maketitle
	
	\begin{abstract}
		Opacity is a property that characterizes the system's capability to keep its ``secret'' from being inferred by an intruder that partially observes the system's behavior. In this paper, we are concerned with enhancing the opacity using insertion functions, while at the same time,  enforcing the task specification in a parametric stochastic discrete event system. We first obtain the parametric Markov decision process that encodes all the possible insertions. Based on which, we convert this parameter and insertion function co-synthesis problem into a nonlinear program. We prove that if the output of this program satisfies all the constraints, it will be a valid solution to our problem. Therefore, the security and the capability of enforcing the task specification can be simultaneously guaranteed.
	\end{abstract}
	
	\IEEEpeerreviewmaketitle
	
	\section{Introduction}
	Opacity refers to a system's capability to hide its ``secret" information from being inferred by an outside intruder. It is often assumed that the intruder knows the system dynamic or even the opacity enforcing strategy, but with only partial observability to the system behavior. The secret is said to be \emph{opaque} to the intruder if, for every system behavior relevant to the secret, there is an observationally equivalent non-secret behavior, therefore the intruder is never sure about whether the secret has occurred or not.
	
	The research on opacity has received increasing interest because of its applications in cybersecurity. There has been abundant existing work in the last decade. The opacity problem was first introduced in the computer science community \cite{mazare2004using} and quickly spread to discrete event system (DES) researchers, see e.g \cite{saboori2007notions,bryans2008opacity,wu2014synthesis}. 
	
	There are essentially two main directions in the opacity research --- verification and enforcement. The verification problem studies whether the system is opaque or not given the system model, observation mask, and the secret. There are many notions of opacity in the existing literature such as language-based opacity (LBO) (the secret is in the form of regular or $\omega$ regular languages), initial-state opacity (ISO) (the initial state is the secret), current-state opacity (CSO) (the secret is revealed if the intruder is sure that the current state is a secret state), initial-and-final-state opacity (IFO) (the initial and final states are a secret pair) and $K$-step opacity (the intruder cannot infer any secret behavior happened in the past $K$ steps).  A verification algorithm for LBO was introduced in \cite{lin2011opacity} and its relationship with observability and diagnosability of DES was discussed. The verification of ISO and IFO was studied in \cite{wu2013comparative}. Cassez et.al \cite{cassez2009dynamic} explores the verification of CSO. It has been shown that these four notions of opacity are equivalent and can be transformed from one to another \cite{wu2013comparative}. Therefore, in this paper, we focus on CSO. Furthermore, the verification of $K$-step and infinite step (where $K\rightarrow\infty$) opacity were studied in \cite{saboori2011verification,saboori2012verification}.  In recent years, opacity was also extended to probabilistic systems to provide a quantitative measure of opacity instead of just a yes or no binary answer. CSO in probabilistic finite automata was introduced in \cite{saboori2014current}. In \cite{berard2015probabilistic}, quantification of LBO in terms of $\omega$-regular languages for Markov Decision Process was studied and the problem was transformed to  probabilistic model checking \cite{baier2008principles}. 
	
	On the other hand, there is also much progress in opacity enforcement --- synthesizing functions that modify observed system behavior such that the opacity can be enforced or maximized. Supervisory control theory \cite{ramadge1987supervisory} was adopted in opacity-enforcing in \cite{saboori2008opacity,saboori2012opacity,darondeau2015enforcing,yin2016uniform} where the supervisor dynamically disables certain system behaviors that would reveal the secret. Dynamic observer approach was proposed in \cite{cassez2012synthesis} where the observability of every system event was dynamically changed. However, the approaches mentioned above either constrain the full system behavior or may create new observed behaviors that don't exist in the original system which leaves the clue of the defense model. As a result, insertion functions that dynamically insert observable events \cite{wu2014synthesis} and more recently, edit functions \cite{wu2016obfuscator}, which can also erase events were introduced. The basic idea is to introduce a game structure (called all insertion/edit structure) that encodes all the valid system and insertion/edit function moves. Then the synthesis of opacity enforcing functions is equivalent to finding a winning strategy such that no matter what the original system outputs, the opacity can be enforced. The stochastic extension to insertion function synthesis for maximized opacity was then introduced in \cite{wu2016enhancing} where a Markov Decision Process (MDP) similar to AIS  was constructed, and dynamic programming was applied to find the optimal insertion function. 
	
	To the best of our knowledge, however, there are no results on opacity enforcement in parametric models while also considering other task specifications. Therefore, in this paper, we are motivated to fill this gap. By considering a parametric stochastic discrete event system (PSDES) model and a given task specification, we first get the parametric Markov Decision Process (PMDP) that encodes all the possible insertion actions. Then the insertion function synthesis and task specification enforcement problem is converted into a nonlinear programming (NLP) that can be solved to simultaneously synthesize the insertion function and the parameters. We prove the correctness of our NLP if it finds a valid solution that respects all our constraints. 
	
	In the rest of this paper, Section \ref{sec:pre} defines the relevant models. Section \ref{sec:opacity} introduces the basic of the opacity notion and the insertion mechanism. Section \ref{sec:prob} formulates our parameter and insertion function co-synthesis problem. Section \ref{sec:main_results} presents our main results while running through a motivating example. Section \ref{sec:conclusion} concludes the paper.

	\section{Preliminaries}\label{sec:pre}
	In this paper, we consider the opacity in the framework of the parametric stochastic discrete event system (PSDES). 
	\subsection{Parametric Stochastic Discrete Event System (PSDES)}
	We first introduce discrete event systems (DES)  modelled as non-deterministic finite-state automata (NFA) \cite{cassandras2009introduction}  $G=(Q,\Sigma,\delta,Q_0)$, where $Q$ is a finite set of states, $\Sigma$ is a finite set of events, $\delta:Q\times \Sigma^*\rightarrow 2^Q$ is a transition function, $Q_0\subseteq Q$ is a set of initial states. The generated language $\mathcal{L}(G)=\{\omega\in \Sigma^*|\exists q_0\in Q_0, \delta(q_0,\omega) \text{ is defined}\}$. $G$ is assumed to be partially observable and $\Sigma$ is partitioned into two disjoint sets, namely observable set $\Sigma_o$ and unobservable set $\Sigma_{uo}$ such that $\Sigma_o\cup\Sigma_{uo}=\Sigma$. Given a string $\omega\in\Sigma^*$, an observation mask (natural projection) $O:\Sigma^*\rightarrow\Sigma_{o}^*$ is defined recursively as $O(\omega)=O(\omega')O(\sigma)$ where $\omega=\omega'\sigma$, $\omega'\in\Sigma^*$ and $\sigma\in\Sigma$, $O(\sigma)=\sigma$ if $\sigma\in\Sigma_{o}$ and $O(\sigma)=\epsilon$ if $\sigma\in\Sigma_{uo}\cup\{\epsilon\}$, where $\epsilon$ stands for the empty string. Given strings $\omega,\omega'$, if $\omega'$ is a prefix of $\omega$, we denote it as  $\omega'\preceq\omega$. If $\omega'$ is a strict prefix of $\omega$, we denote it as $\omega'\prec\omega$.
	
	A Stochastic Discrete Event System (SDES)  is denoted by $H=(Q,\Sigma,P,\pi_0)$, where $Q$ is a finite set of states, $\Sigma$ is a finite set of events, $P(q,\sigma,q')\rightarrow [0,1]$ is a transition function specifying the probability to transit from $q\in Q$ to $q'\in Q$ with the event $\sigma\in\Sigma$, $\pi_0:Q\rightarrow[0,1]$ defines the initial distribution. 
	
	In this paper, we have the following assumption to the SDES model.
	\begin{assumption}\label{asp:wu}
		Given an SDES $H=(Q,\Sigma,P,\pi_0)$,  we assume that $\sum_{q'\in Q}\sum_{\sigma\in\Sigma}P(q,\sigma,q')\in\{0,1\}$. We also require that $\sum_{q'\in Q}\sum_{\sigma\in\Sigma_{uo}}P(q,\sigma,q')<1$.    
	\end{assumption}
	This assumption essentially requires that each state is either a sink with no outgoing transition, or all the transitions sum up to $1$. If there is an unobservable transition from
	$s$ to $s'$, there must also be at least one observable
	transition from $s$ to $s'$. It can be observed that we can associate an SDES $H=(Q,\Sigma,P,\pi_0)$ to an NFA $G=(Q,\Sigma,\delta,Q_0)$, where $\delta(q,\sigma,q')$ is defined if and only if $P(q,\sigma,q')>0$. $Q_0=\{q|\pi_0(q)>0\}$. Therefore, the generated language of $H$ is defined to be, $\mathcal{L}(H)=\mathcal{L}(G)$.  
	Let $|Q|=n$, i.e., there are $n$ states in $H$, we denote $P_\sigma$ where $\sigma\in\Sigma$ as an $n\times n$ matrix where $P_\sigma(i,j)=P(q_i,\sigma,q_j)$. $P_\sigma$ can be naturally extended to $P_\omega$ where $\omega\in \Sigma^*$, $P_\omega(i,j)=P(q_i,\omega,q_j)$. For $\omega=\sigma_1\sigma_2...\sigma_k\in \Sigma^*$, $P_\omega=\prod_{i=1}^{k}P_{\sigma_i}$.
	
	A PSDES $PH=(Q,\Sigma,P,\pi_0,V)$ is an SDES that satisfies Assumption \ref{asp:wu}, where  its transition function $P(q,\sigma,q')=f_{q,\sigma,q'}(V)$ where $V=\{v_1,v_2,...,v_m\}$ is a finite set of  parameters that are strictly positive and real-valued, and $f_V\in \mathcal{F}_V$. In this paper, $\mathcal{F}_V$ is a set of posynomial functions \cite{boyd2007tutorial} which are defined in the form of
	\begin{equation}\label{eqn:pos}
		f_V = \sum_{k=1}^{K}c_kv_1^{a_{1k}}...v_m^{a_{mk}}   
	\end{equation}
	where $c_k\in\mathbb{R}_{>0},a_{ik}\in\mathbb{R}$. Therefore, the transition probabilities are parameterized by $V$. The transition probabilities of the parametric models in many existing benchmarks can be converted to this class \cite{cubuktepe2017sequential}. A valuation $v(V)\in\mathbb{R}^m$ maps $V$ to $\mathbb{R}^m$. $Val^v$ denotes the set of all valuations.
	
	Figure \ref{fig:PSDES} shows a communication network modelled as a PSDES $PH=(Q,\Sigma,P,\pi_0,V)$, where $Q=\{0,1,2,...,10\}$, $\Sigma=\{a,b,c\}$, $\pi_0(0)=1$, $V=\{v_1,...,v_7\}$. The information generated in communication node $0$ must be transmitted to node $10$ through routing. $\Sigma$ denotes the message types and $P(q,\sigma,q')$ denotes the probability that the communication node $q$ decides to transmit the data to $q'$ for routing purpose. The label on each transition arrow from $q$ to $q'$ is in the form of $\sigma,p_v$ from $P(q,\sigma,q')=p_v$. We omit the transition probability if it is $1$.
	
	\begin{figure}
		\centering	
		\begin{tikzpicture}[shorten >=1pt,node distance=2cm,on grid,auto, bend angle=20, thick,scale=0.7, every node/.style={transform shape}] 
		\node[state,initial] (s0)   {$0$}; 
		\node[state] (s1) [above right= of s0] {$1$}; 
		\node[state] (s2) [right= of s1]  {$2$}; 
		\node[state] (s3) [right= of s2] {$3$}; 
		\node[state] (s4) [right= of s0] {$4$}; 
		\node[state] (s5) [right= of s4] {$5$}; 
		\node[state] (s6) [right= of s5]  {$6$}; 
		\node[state] (s7) [below right= of s0]  {$7$}; 
		\node[state,shade] (s8) [right= of s7]  {$8$}; 
		\node[state,shade] (s9) [below right= of s7]  {$9$}; 
		\node[state] (s10) [right= of s6]  {$10$}; 
		\path[->]

		(s0) edge node {$a,v_1$} (s1) 
		(s1) edge node {$b$} (s2) 
		(s2) edge node {$c$} (s3) 
		(s3) edge node {$b$} (s10) 
		(s0) edge node {$c,v_2$} (s4)  
		(s4) edge node {$a,v_4$} (s2) 
		(s4) edge node {$b,v_5$} (s5)  
		(s5) edge node {$a$} (s6)  
		(s6) edge node {$b$} (s10)  
		(s0) edge node [below left] {$b,v_3$} (s7) 
		(s7) edge node {$a,v_6$} (s8)
		(s7) edge node [below left] {$c,v_7$} (s9)
		(s8) edge node [below left] {$b$} (s10)
		(s9) edge node [below left] {$b$} (s10)
		
		; 
		\end{tikzpicture} 
		\caption{An communication network modelled as PSDES}\label{fig:PSDES}
	\end{figure}
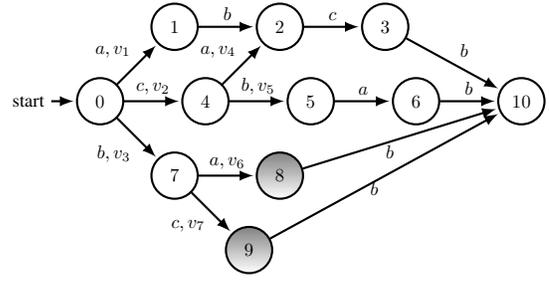

	\subsection{Parametric Markov Decision Process (PMDP)}
	An MDP is a tuple $\mathcal{M}=(S,\pi_0,A,T)$ where $S=\{s_0,s_1,...\}$ is a finite set of states, $\pi_0:S\rightarrow[0,1]$ is the initial distribution, $A$ is a finite set of actions, $T(s,a,s'):=Pr(s'|s,a)$, for $s,s'\in S,a\in A$.
	
	For each state $s\in S$, we denote $A(s)$ as the set of available actions. A Discrete Time Markov Chain (DTMC) is a special case of MDP with $|A(s)|<=1$ for all $s\in S$, where $|A(s)|$ is the cardinality of the set $A(s)$. It can be observed that SDES $H=(Q,\Sigma,P,\pi_0)$ can be seen as an MDP (more precisely, a DTMC) $\mathcal{M}=(Q,\pi_0,A,T)$, where $|A|=1$ and thus can be arbitrarily defined,  $T(q,a,q')=\sum_{\sigma\in\Sigma}P(q,\sigma, q')$. 
	
	Reasoning on an MDP requires resolving its nondeterminism in the action selection, which is done by a scheduler. Formally, a (memoryless) scheduler of a given MDP $\mathcal{M}$ is defined to be $\mu(s,a)\in [0,1]$ which denotes the probability of choosing $a\in A(s)$ at a state $s\in S$. 
	
	A parametric MDP is a tuple $\mathcal{M}=(S,\pi_0,A,T,V)$, where $V=\{x_1,...,x_n\}$ is a finite set of parameters and $T$ is in the form $T:S\times A\times S\rightarrow \mathcal{F}_V$, where $\mathcal{F}_V$ is the set of posynomial functions.

	\section{Opacity notion and enforcement} \label{sec:opacity}
	\subsection{Current state opacity}
	In this paper, we focus on the current state opacity (CSO) property. 
	\begin{definition} Given a DES $G=(Q,\Sigma,\delta,Q_0)$, observation mask $O$, and a set of secret states $Q_s\subseteq Q$, the system is current-state opaque if $\forall \omega\in \mathcal{L}_s(G)=\{\omega\in\mathcal{L}(G)|\exists q_0\in Q_0,\delta(q_0,\omega)\cap Q_s\neq\emptyset\}$, there exists another string $\omega'$, such that $\omega'\in \mathcal{L}_{ns}(G)=\{\omega\in\mathcal{L}(G)|\exists q_0\in Q_0,\delta(q_0,\omega)\cap (Q\backslash Q_s)\neq\emptyset\}$ and $O(\omega)=O(\omega')$.
	\end{definition}
	It can be seen that $\mathcal{L}_{s}(G)$ and $\mathcal{L}_{ns}(G)$ are not necessarily disjoint, that is, there may exist $\omega\in\mathcal{L}_{s}(G)\cap\mathcal{L}_{ns}(G)$. By definition, such string will not violate the CSO requirement. Intuitively, CSO requires that, if there is a string $\omega$, such that $\exists q_0\in Q_0$, $\delta(q_0,\omega)\subseteq Q_s$, that is, the system state, after executing $\omega$, lands in some secret state for sure, then there must exist another string $\omega'\in \mathcal{L}_{ns}(G)$, such that $O(\omega)=O(\omega')$. That is, they have the same observation but $\omega'$ may take the system to some nonsecret states. Therefore, if there is an intruder that knows $G$ and $Q_s$ and  observes the events with the observation mask $O$, it will never be sure if the system is currently in some secret states. Thus, whether the current system state is a secret state remains opaque to the intruder. Formally, the set of observable strings that will never reveal the secret can be written as $\mathcal{L}_{safe} = O[\mathcal{L}_{ns}(G)]\backslash((O[\mathcal{L}(G)]\backslash O[\mathcal{L}_{ns}(G)])\Sigma^*_o)$.
	
	Verifying CSO can be done by constructing its observer automaton in a standard way as in \cite{cassandras2009introduction}, and then checking if any observer state contains solely secret states. Take the system in Figure \ref{fig:PSDES} as an example. It can be seen as an NFA if we ignore the probabilities. The shaded states, i.e., the communication nodes $8$ and $9$ are vulnerable to potential attacks. If there is an outside intruder finds out that the message reaches either $8$ and $9$, the message may be intercepted. That is $Q_s=\{8,9\}$. We assume that all the events are observable. Its observer automaton is identical to the original system.  It can be found that this system is not opaque because if the intruder observes $ba$ or $bc$, it will be sure that the system is currently in state $8$ or $9$.
	
	Given an SDES $H$, it is possible to quantify the level of CSO \cite{wu2016enhancing}. We denote the set $\mathcal{L}_{rs}=\{\omega\in\mathcal{L}(H)|O(\omega)\notin\mathcal{L}_{safe}, \forall \omega'\prec\omega, O(\omega')\in\mathcal{L}_{safe}\}$ as all the strings that will reveal the secret for the first time. Then the opacity level, or the probability that the secret is never revealed can be computed as $P_{CSO}=1-P(\mathcal{L}_{rs})$. The computation details can be found in \cite{wu2016synthesis}.
	
	\subsection{Insertion function}
	When the opacity property does not hold, it is desired to design mechanisms to enforce (in DES) or enhance it (in SDES). In this paper, we use the insertion functions that inserts extra observable events before each system event and then output the modified string. From the intruder's perspective, the inserted observable events are not distinguishable from the observable events that actually happened. Moreover, we assume that the intruder does not have any information about the structure of the insertion function. In this way, we can modify a string $\omega$ into $\omega'$ such that $|O(\omega)|<=|O(\omega')|$, to trick the intruder to think that the system lands in other states even if the real system lands in a secret state.
	
	The insertion function $I:\Sigma_o^*\times \Sigma_o\rightarrow\Sigma_o^*\Sigma_o$ inserts a string of observable events before a observable system event based on the history of previous observable system events. For example, if there is an observable string $\omega\sigma$ where $\omega\in\Sigma_o^*,\sigma\in\Sigma_o$, then $I(\omega,\sigma)=\omega'\sigma$ where $\omega'\in\Sigma_o^*$ is the inserted string. In Figure \ref{fig:PSDES}, from the initial state $0$, suppose $b$ happens, the state will transit to $7$. But if we insert an event $a$ before $b$, that is $I(b)=ab$, then the intruder will observe $ab$ and thus thinks the current state is $2$.  With a slight abuse of the notation, the insertion function can be naturally extended to $I(\epsilon)=\epsilon$, $I(\omega\sigma)=I(\omega)I(\omega,\sigma)$. Then the modified language given a DES $G$ can be written into $I(O[\mathcal{L}(G)])=\{\omega'\in\Sigma_o^*|\exists \omega\in O[\mathcal{L}(G)], I(\omega)=\omega' \}$.
	
	The insertion function $I$ should satisfy private enforceability \cite{wu2014synthesis}. Namely, it should be defined to all the possible observable behaviors of the system. Formally, $\forall \omega\sigma\in O[\mathcal{L}(G)]$ where $\omega\in\Sigma_o^*,\sigma\in\Sigma_o$, $\exists \omega'\in\Sigma_o^*$, such that $I(\omega,\sigma)=\omega'\sigma$. Furthermore, the output of the insertion function should be privately safe by enforcing CSO. Formally, $I(O[\mathcal{L}(G)])\in \mathcal{L}_{safe}$. That is, after insertion, the observed behavior by the intruder should always lie in $\mathcal{L}_{safe}$.
	
	With an insertion function $I$ in an SDES $H$, the requirement to absolute private enforceability may be relaxed. That is, there may not exist a well-defined insertion function that achieves  CSO enforcement with probability $1$. To compare different insertion functions, we can compute the opacity level after the insertion. Similarly, we denote the set $\mathcal{L}^I_{rs}=\{\omega\in\mathcal{L}(H)|I[O(\omega)]\notin\mathcal{L}_{safe}, \forall \omega'\prec I[O(\omega)], \omega'\in\mathcal{L}_{safe}\}$ as all the strings that will reveal the secret for the first time after insertion. Then the opacity level can be computed as $P^I_{CSO}=1-P(\mathcal{L}^I_{rs})$. 
	
	\section{Problem Formulation}\label{sec:prob}
	Given a PSDES $PH=(Q,\Sigma,P,\pi_0,V)$, we would like to synthesize the parameters $V$ and the insertion function $I$, such that the opacity level $P_{CSO}$ can be no less than a given threshold $\gamma$. Furthermore, the PSDES should also satisfy certain task property. In this paper, we are  interested in the reachability specifications written as $\phi_t$ where 
	$$
	\phi_t = P_{\leq \lambda}(\diamondsuit D)
	$$
	where $0\leq\lambda\leq 1$, $D\subseteq Q$ and $\diamondsuit$ denotes ``eventually''. That is, we require that the probability to reach any undesired state $q\in D$ is bounded by $\lambda$. This reachability probability can be computed by making the states in $D$ absorbing£¬ i.e., introducing self loop with probability one on these states, and treat the PSDES as a Parametric DTMC where the specification $\phi_t$ can be efficiently verified \cite{quatmann2016parameter}. Formally,
	
	\begin{problem}
		Given a PSDES $PH=(Q,\Sigma,P,\pi_0,V), 0\leq \gamma\leq 1$, observation mask $O$, and the specification $\phi_t = P_{\leq \lambda}(\diamondsuit D)$, suppose $P_{CSO}<\gamma$, find a valuation $v\in Val^V$ and an insertion function $I$ such that $P^I_{CSO}\geq\gamma$ and $\phi_t$ is satisfied. 
	\end{problem}
	
	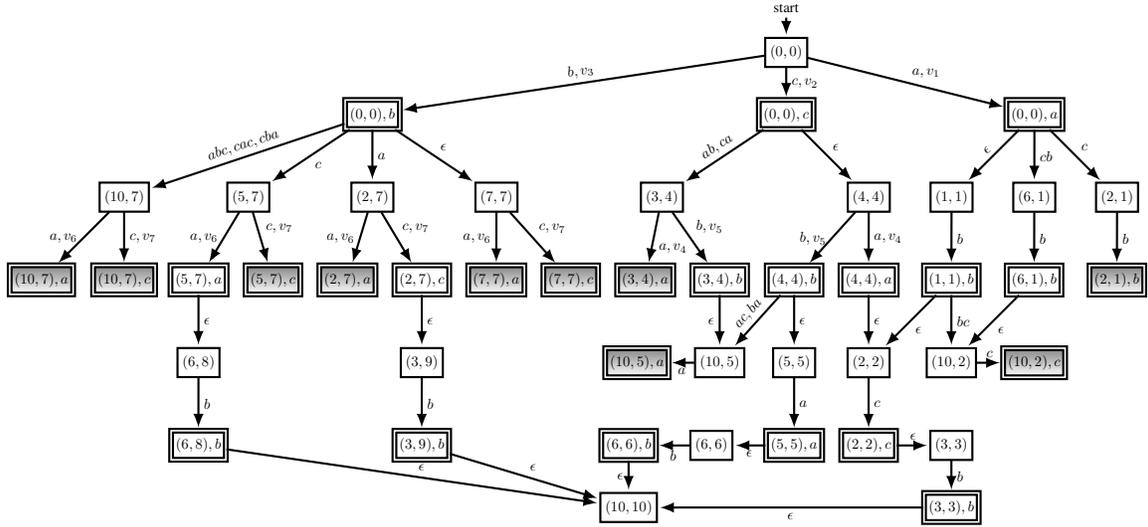
\begin{figure*}[ht]
		\centering
		\begin{tikzpicture}[shorten >=1pt,node distance=2cm,on grid,auto, bend angle=20, thick,scale=0.55, every node/.style={transform shape}] 
		\node[block,initial above] (s0)   {$(0,0)$}; 
		\node[double,block] (s1) [below left = 1.5 cm and 10 cm of s0] {$(0,0),b$}; 	
		\node[double,block] (s2) [below = 1.5 cm of s0]  {$(0,0),c$}; 
		\node[double,block] (s3) [below right= 1.5 cm and 6 cm of s0] {$(0,0),a$}; 
		
		\node[block] (s4) [below left= 2 cm and 6 cm of s1] {$(10,7)$}; 
		\node[block] (s5) [right = 3 cm  of s4] {$(5,7)$}; 
		\node[block] (s6) [right = 3 cm of s5] {$(2,7)$};
		\node[block] (s7) [right = 3 cm  of s6] {$(7,7)$};
		\node[block] (s8) [right = 4 cm of s7] {$(3,4)$};  
		\node[block] (s9) [right = 5 cm  of s8] {$(4,4)$}; 
		\node[block] (s10) [right = 2 cm  of s9]  {$(1,1)$}; 
		\node[block] (s11) [right = 2 cm  of s10]  {$(6,1)$}; 
		\node[block] (s12) [right = 2 cm  of s11] {$(2,1)$}; 
		\node[double,block,shade] (s13) [below left = 2 cm and 2 cm of s4] {$(10,7),a$};
		\node[double,block,shade] (s14) [right = 2cm of s13] {$(10,7),c$};
		\node[double,block] (s15) [right = 1.8cm of s14] {$(5,7),a$};
		\node[double,block,shade] (s16) [right = 1.8cm of s15] {$(5,7),c$};
		\node[double,block,shade] (s17) [right = 1.8cm of s16] {$(2,7),a$};
		\node[double,block] (s18) [right = 1.8cm of s17] {$(2,7),c$};
		\node[double,block,shade] (s19) [right = 1.8cm of s18] {$(7,7),a$};
		\node[double,block,shade] (s20) [right = 1.8cm of s19] {$(7,7),c$};
		
		\node[double,block,shade] (s21) [right = 1.8cm of s20] {$(3,4),a$};
		\node[double,block] (s22) [right = 1.8cm of s21] {$(3,4),b$};
		\node[double,block] (s23) [right = 1.8cm of s22] {$(4,4),b$};
		\node[double,block] (s24) [right = 1.8cm of s23] {$(4,4),a$};
		
		\node[double,block] (s25) [right = 2cm of s24] {$(1,1),b$};
		\node[double,block] (s26) [right = 2cm of s25] {$(6,1),b$};
		\node[double,block,shade] (s27) [right = 2cm of s26] {$(2,1),b$};
		
		\node[block] (s28) [below = of s15] {$(6,8)$}; 
		\node[block] (s29) [below = of s18] {$(3,9)$}; 
		\node[block] (s30) [below = of s22] {$(10,5)$};
		\node[block] (s31) [below = of s23] {$(5,5)$};
		\node[block] (s32) [below = of s24] {$(2,2)$};  
		\node[block] (s33) [below = of s25] {$(10,2)$}; 
		
		\node[double,block] (s34) [below = of s28] {$(6,8),b$};
		\node[double,block] (s35) [below = of s29] {$(3,9),b$};
		\node[double,block,shade] (s36) [left = of s30] {$(10,5),a$};
		\node[double,block] (s37) [below = of s31] {$(5,5),a$};
		\node[double,block] (s38) [below = of s32] {$(2,2),c$};
		\node[double,block,shade] (s39) [right = of s33] {$(10,2),c$};
		
		\node[block] (s40) [left = of s37] {$(6,6)$};
		\node[block] (s41) [right = of s38] {$(3,3)$}; 
		
		\node[double,block] (s42) [left = of s40] {$(6,6),b$};
		\node[double,block] (s43) [below = 1.5 cm of s41] {$(3,3),b$}; 
		
		\node[block] (s44) [below = 1.5 cm of s42] {$(10,10)$}; 		
		
		\path[->]
		(s0) edge node [sloped,above] {$b,v_3$} (s1)  
		(s0) edge node {$c,v_2$} (s2)  
		(s0) edge node {$a,v_1$} (s3)
		
		(s1) edge node [sloped,above] {$abc,cac,cba$} (s4)  
		(s1) edge node {$c$} (s5)  
		(s1) edge node {$a$} (s6) 
		(s1) edge node {$\epsilon$} (s7)  
		
		(s2) edge node [sloped,above] {$ab,ca$} (s8)  
		(s2) edge node {$\epsilon$} (s9) 
		
		(s3) edge node [sloped,above] {$\epsilon$} (s10)  
		(s3) edge node  {$cb$} (s11)  
		(s3) edge node {$c$} (s12)
		
		(s4) edge node [below,left]{$a,v_6$} (s13)  
		(s4) edge node {$c,v_7$} (s14)
		
		(s5) edge node [below,left] {$a,v_6$} (s15) 
		(s5) edge node {$c,v_7$} (s16)
		
		(s6) edge node [below,left] {$a,v_6$} (s17)  
		(s6) edge node {$c,v_7$} (s18)
		
		(s7) edge node [below,left] {$a,v_6$} (s19)  
		(s7) edge node {$c,v_7$} (s20)
		
		(s8) edge node {$a,v_4$} (s21) 
		(s8) edge node {$b,v_5$} (s22)  
		
		(s9) edge node [below,left] {$b,v_5$} (s23)  
		(s9) edge node {$a,v_4$} (s24) 
		
		(s10) edge node {$b$} (s25)  
		
		(s11) edge node {$b$} (s26)  
		
		(s12) edge node {$b$} (s27) 
		
		(s15) edge node  {$\epsilon$} (s28) 
		
		(s18) edge node  {$\epsilon$} (s29)  
		
		(s22) edge node [left] {$\epsilon$} (s30) 
		
		(s23) edge node [sloped,above] {$ac,ba$} (s30) 
		(s23) edge node {$\epsilon$} (s31)  
		
		(s24) edge node {$\epsilon$} (s32) 
		
		(s25) edge node {$\epsilon$} (s32)  
		(s25) edge node {$bc$} (s33)  
		
		(s26) edge node {$\epsilon$} (s33) 
		
		(s28) edge node {$b$} (s34)  
		
		(s29) edge node {$b$} (s35) 
		
		(s30) edge node {$a$} (s36) 
		
		(s31) edge node {$a$} (s37)  
		
		(s32) edge node {$c$} (s38) 
		
		(s33) edge node {$c$} (s39) 
		
		(s34) edge node {$\epsilon$} (s44) 
		
		(s35) edge node {$\epsilon$} (s44)  
		
		(s37) edge node {$\epsilon$} (s40) 
		
		(s38) edge node {$\epsilon$} (s41)  
		
		(s40) edge node {$b$} (s42)
		
		(s41) edge node {$b$} (s43) 
		
		(s42) edge node [left]{$\epsilon$} (s44) 
		
		(s43) edge node {$\epsilon$} (s44)
		
		; 
		\end{tikzpicture} 
		\caption{The obtained PMDP model $\mathcal{M}$, the shaded states are blocking states belong to the sink set $\sqcup$}
		\label{fig:PMDP}
	\end{figure*}

	Take the communication network example as shown in Figure \ref{fig:PSDES}, suppose the communication node $5$ has very limited power and computation capability and we would like to avoid using it too often. Then we could define $D=\{5\}$. Also, we assume that there is an intruder eavesdropping the  transmitted message to determine if the message has reached the communication node $8$ or $9$. However, subjected to the  bandwidth and decoding constraints, the intruder could only partially observe the transmitted message and partially decode the message to know the message type. We would like to design the routing probabilities and also the insertion function, such that the opacity level is no less than $\gamma$ and the task specification to avoid using node $5$ with probability larger than $1-\lambda$ can be satisfied.

	\section{Main Results}\label{sec:main_results}
	\subsection{Obtaining PMDP for Insertion Function Synthesis}

	In this paper, we require the following assumption to guarantee that the structure of the underlying graph of the PSDES $PH$ does not change. 
	
	\begin{assumption}\label{asp:topology}
		For a parametric PSDES $PH=(Q,\Sigma,P,\pi_0,V)$, unless $P_v(q,\sigma,q')\in{0,1}$ for any evaluation, it must hold that
		$
		0<P_v(q,\sigma,q')<1
		$
		$\forall v\in Val^v,\forall q,q'\in Q,\forall \sigma\in \Sigma$
		where $v$ is an valid evaluation of the parameter vector $V$.
	\end{assumption}

	Given an SDES $H$, an MDP $\mathcal{M}$ can be constructed to show all the possible insertion functions or strategies in MDP's term \cite{wu2016synthesis} with the following assumption to guarantee that the resulting $\mathcal{M}$ has a finite state space. 
	
	\begin{assumption}\label{asp:finite}
		Every observable string in $H$ is of a finite length. 
	\end{assumption}
	
	In our case, where the model is a PSDES $PH=(Q,\Sigma,P,\pi_0,V)$, with Assumption \ref{asp:topology} and \ref{asp:finite},  we could follow the same algorithm to obtain a PMDP $\mathcal{M}=(S,\hat{s},A,T,V)$. Note that this PMDP has a unique initial state $\hat{s}$. 
	
	This PMDP $\mathcal{M}=(S,\hat{s},A,T)$ can be seen as a game between the system $PH$ and the insertion mechanism. The states $S$ can be divided into two disjoint sets, namely the system states $S_s$ and insertion states $S_i$. As shown in Figure \ref{fig:PMDP}, the states in single line blocks are system states $S_s=E_i\times E_s\times \pi$, and the states in double line blocks are insertion states $S_i=E_i\times E_s\times\Sigma_o\times\pi$, where $E_i$ is the state estimate of the intruder after the observation mask $O$ and inserted events, which could be wrong since it could be fooled by the inserted events, $E_s$ is the system's state estimate after observation mask $O$, which is always correct because it is aware of what events have been inserted. $\Sigma_o$ denotes the recently observed event from $PH$. $\pi$ denotes the state distribution of $PH$ based on the observed strings without insertions so far. The details of obtaining this PMDP, including constructing the states, computing the belief state $\pi$ and transition probabilities can be found in \cite{wu2016synthesis}.

	Take the PSDES as shown in Figure \ref{fig:PSDES} as an example, the resulting $\mathcal{M}$ is illustrated in Figure \ref{fig:PMDP}. We didn't show $\pi$ in Figure \ref{fig:PMDP} due to the space limitation and $\pi$ is not relevant to our further development. The initial position is $s_0 = ((0,0),\pi(0)=1)$. Suppose the event $b$ in $H$ happens, then the next state which is an insertion state $s'$ will be $s' = ((0,0),b,\pi(7)=1)$ and the transition probability is $v_3$. The estimates of the intruder and the system don't change since the insertion has not been decided yet, and will be handled at the insertion state $s'$. As can be observed in Figure \ref{fig:PSDES}, there are multiple insertion choices that are available at $s'$, such as inserting $abc,cac,cba,c,a$ or the empty string $\epsilon$. For example, if we choose to insert string $abc$ in front of $b$, we will end up in the system state $s'' = ((10,7),\pi(7)=1)$ with probability $1$. That is, the intruder believes that the current state is at $10$, but the system $PH$ is actually at $7$. However, it can be seen from $s''$ that what ever happens in $PH$ later, either $a$ or $c$, we could not find a valid insertion. That is, the insertion strategy gets blocked since there is no valid insertion action available. From Figure \ref{fig:PMDP}, we denote $\sqcup$ as the set of all the shaded states  that are blocking and should be avoided for opacity enforcement. The desired final state is $(10,10),\pi(10)=1$. Once we reach this state, we are done since both the system and the intruder does not expect any new events and the opacity has been preserved along the way.
	
	\begin{remark}
		Note that in this PMDP $\mathcal{M}$, $A=\Sigma_o^*\cup\perp$. For a system state $s\in S_s$, there is only one dummy action $\perp$ defined and with certain probability, $s$ will transit to some  $s'\in S_i$. Therefore, the labels $\sigma\in\Sigma_o$ on the transition from a system state to an insertion state \emph{does not} mean an action in PMDP, as each action in the MDP should incur a distribution that sums up to 1. It simply illustrates the event that has just  happened in the PSDES $PH$. On the other hand, the labels $\omega\in\Sigma_o^*$ on the transition from an insertion state to a system state \emph{means an action} in the PMDP, which refers to the string $\omega$ to be inserted before the recently observed system event. Then after the insertion, with probability $1$, the insertion state transits to a system state. 
	\end{remark}
	It can be seen that the transition probability from an insertion state to a system state on an insertion action $\omega\in\Sigma_o^*$ will always be $1$. And the transition probability from a system state $s$ to an insertion state $s'$, given observed history string $\omega=\sigma_0\sigma_1...\sigma_{k-1}$ and the most recent output event $\sigma_{k}$ can be computed as
	\begin{equation}\label{eqn:T}
		T(s,\perp,s') = ||\pi_0\prod_{i=0}^{k}P_{uo}^*P_{\sigma_i}||_1
	\end{equation}
	where we define $P_{uo}=\sum_{\sigma\Sigma_{uo}}P_\sigma$ in the PSDES $PH=(Q,\Sigma,P,\pi_0,V)$, so $P_{uo}(i,j)$ denotes the probability to transit from $q_i$ to $q_j$ under some unobservable event. Since $PH$ may be partially observable, an arbitrary number of unobservable events could happen between any two observable events. Then $P_{uo}^*=\sum_{i=0}^\infty P_{uo}^i$, where $P_{uo}^*(i,j)$ denote the probability to transit from $q_i$ to $q_j$ under any string of unobservable events. $P_{uo}$ converges to $(I_n-P_{uo})^{-1}$ if $\sum_{q'\in Q}\sum_{\sigma\in\Sigma_{uo}}P(q,\sigma,q')<1$, as assumed in this paper.
	
	To make sure that our PMDP $\mathcal{M}$ satisfies the requirement that all its transition probabilities belong to the posynomial function class. We need the following assumption.
	
	\begin{assumption}\label{asp:Puo}
		Either $P_{uo}$ is a  constant matrix or there exists a finite integer $K$ such that $P_{uo}^k=0,\forall k> K$.
	\end{assumption}
	
	\begin{lemma}
		With Assumption 1, the MDP $\mathcal{M}$ obtained from $PH=(Q,\Sigma,P,\pi_0,V)$ is a parametric MDP with all its transition probabilities in the posynomial function class.
	\end{lemma}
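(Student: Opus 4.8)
The plan is to verify that every entry of the transition function $T$ of $\mathcal{M}$ lies in the posynomial class $\mathcal{F}_V$. Since $\mathcal{M}$ has exactly two kinds of transitions, I would treat them separately. The transitions from an insertion state to a system state always carry probability $1$, which is itself a posynomial (a single monomial with coefficient $c_1=1$ and all exponents zero), so these pose no difficulty. The entire burden therefore falls on the transitions from a system state to an insertion state, whose probabilities are given by Equation \eqref{eqn:T}.

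The backbone of the argument is the elementary but essential fact (see \cite{boyd2007tutorial}) that the posynomial class is closed under both addition and multiplication: sums and products of posynomials are again posynomials. Consequently, multiplying matrices whose entries are posynomials again produces a matrix with posynomial entries, since each output entry is a finite sum of products of input entries. By the definition of a PSDES, every entry of each event matrix $P_\sigma$ equals $f_{q,\sigma,q'}(V)\in\mathcal{F}_V$, and the initial distribution $\pi_0$ is a constant nonnegative vector. Hence, once I establish that $P_{uo}^*$ has posynomial entries, the product $\pi_0\prod_{i=0}^{k}P_{uo}^*P_{\sigma_i}$ has posynomial entries as well; its $1$-norm is then merely the sum of these nonnegative entries and is therefore a posynomial as well, which is precisely the quantity appearing in \eqref{eqn:T}.

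The only genuine obstacle is the term $P_{uo}^*=\sum_{i=0}^{\infty}P_{uo}^i=(I_n-P_{uo})^{-1}$, and this is exactly where the structural assumptions enter. In general the matrix inverse introduces a division by $\det(I_n-P_{uo})$, producing rational functions of $V$ that need not be posynomials, so the infinite series cannot be handled directly. I would resolve this by the dichotomy of Assumption \ref{asp:Puo}. In the first case $P_{uo}$ is a constant matrix, so $(I_n-P_{uo})^{-1}$ is a constant matrix as well (its existence being guaranteed by the strict inequality in Assumption \ref{asp:wu}), and constant nonnegative entries are posynomials. In the second case there is a finite $K$ with $P_{uo}^k=0$ for all $k>K$, so the series truncates to the finite sum $P_{uo}^*=\sum_{i=0}^{K}P_{uo}^i$; each power $P_{uo}^i$ has posynomial entries by closure under multiplication, and the finite sum preserves this by closure under addition. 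Either way, $P_{uo}^*$ has posynomial entries.

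Combining these facts completes the argument: the insertion-to-system probabilities equal the constant posynomial $1$, and the system-to-insertion probabilities of \eqref{eqn:T} are posynomials because they are $1$-norms of products of posynomial-valued matrices. I expect the handling of $P_{uo}^*$ --- converting the potentially rational matrix inverse into a posynomial by invoking Assumption \ref{asp:Puo} --- to be the one step carrying the real content, while the matrix-algebra bookkeeping around it is routine.
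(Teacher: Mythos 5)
Your proof is correct and follows essentially the same route as the paper's: closure of posynomials under addition and multiplication, the dichotomy of Assumption \ref{asp:Puo} to tame $P_{uo}^*$ (constant inverse versus finite truncation), and the observation that the $1$-norm in (\ref{eqn:T}) is a sum of posynomials. The only difference is bookkeeping: the paper introduces ``quasi-posynomial'' matrices (entries either posynomial or $0$) because the zero entries of $P_\sigma$ are not themselves posynomials (coefficients must be strictly positive), and it explicitly notes that $\pi_k$ cannot be the all-zero vector so the final sum is a genuine posynomial --- two small points your argument glosses over but that do not change the substance.
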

	\begin{proof}
		If every element in a vector/matrix is either a posynomial function or $0$, we call it a quasi-posynomial vector/matrix. Denote $|Q|=n<\infty$, from the definition of posynomial functions in (\ref{eqn:pos}), it can be seen that posynomials are closed under addition and multiplication, and so are the quasi-posynomial matrices. $P_{uo}=\sum_{\sigma\Sigma_{uo}}P_\sigma$ is a quasi-posynomial matrix since $P_\sigma$ is a quasi-posynomial matrix and $\Sigma_{uo}$ is a finite set. Then with Assumption \ref{asp:Puo}, if $P_{uo}$ is a constant quasi-posynomial matrix, $P^*_{uo}$ will converge to $(I_n-P_{uo})^{-1}$, which is also guaranteed to be a quasi-posynomial matrix . Otherwise if $P_{uo}^k=0,\forall k> K$, $P_{uo}^*=\sum_{i=0}^K P_{uo}^i$, and, thus, $P^*_{uo}$ is still a quasi-posynomial matrix.
		
		From (\ref{eqn:T}) where the transition probability is computed, $\pi_0$ is a constant quasi-posynomial vector with $n$ elements. With Assumption \ref{asp:Puo}, $P_{uo}^*$ is guaranteed to be a $n\times n$ quasi-posynomial matrix, $P_{\sigma_i}$ is a quasi-posynomial matrix. Since $k$ is guaranteed to be finite in (\ref{eqn:T}) from Assumption \ref{asp:finite}, $\pi_k=\pi_0\prod_{i=0}^{k}P_{uo}^*P_{\sigma_i}$ is a quasi-posynomial vector, where $\pi_k(s)$ denotes the probability of landing in the state $s$ after observing $\omega=\sigma_0\sigma_1...\sigma_{k}$. Since $\omega$ is guaranteed to be feasible to happen, $\pi_k$ is not possible to be an all-zero vector. Then the sum of all elements in $\pi_k$ will be $||\pi_k||_1 = ||\pi_0\prod_{i=0}^{k}P_{uo}^*P_{\sigma_i}||_1=T(s,\perp,s')$, is a posynomial function.
	\end{proof}

	\subsection{Insertion Function and Parameter Co-synthesis}
	Now that we have a PSDES $PH=(Q,\Sigma,P,\pi_0,V)$, where $V=\{v_1,...,v_m\}$ and its PMDP $\mathcal{M}=(S,\hat{s},A,T,V)$ that encodes all the possible insertions as its actions. To solve Problem 1, we need to find a valuation $v$ and a insertion strategy $\mu(s,\alpha)$ that inserts the string $\alpha\in A=\Sigma_o^*$ at an insertion state $s\in S_i$, to guarantee that both $P^I_{CSO}\geq\gamma$ and $\phi_t = P_{\leq \lambda}(\diamondsuit D)$ are satisfied. 
	
	To make our problem more meaningful, we assume that there does not exist an insertion strategy to enforce the opacity with probability $1$. Otherwise, we could simply use this insertion function and then synthesize parameters to enforcement the task specification separately. Instead, we would like to solve the insertion function and parameter synthesis when they are coupled. Inspired by \cite{cubuktepe2017sequential}, the solution of this parameter and strategy co-synthesis problem can be converted to a nonlinear program (NLP)  (geometric program (GP), to be more specific) as follows.
	\begin{align}
		&\text{minimize } F = \sum_{i=1}^{m}\frac{1}{v_i}+\sum_{s\in S,\alpha\in A(s)} \frac{1}{\mu(s,\alpha)}\label{eqn:f}\\&\text{ subject to} \\
		& \frac{p^o_{\hat{s}}}{1-\gamma}\leq 1\label{eqn:cso} \\
		& \frac{p^t_{\hat{q}}}{\lambda}\leq 1\label{eqn:task} \\
		&\forall s\in S, \sum_{\alpha\in A(s)} \mu(s,\alpha)\leq 1 \label{eqn:scheduler}\\
		&\forall s\in S, \forall \alpha\in A(s), \text{  } \mu(s,\alpha)\leq 1 \label{eqn:scheduler1}\\ 
		&\forall s\in S, \forall \alpha\in A(s), \text{  }\sum_{s'\in S}T(s,\alpha,s')\leq 1 \label{eqn:tsas}\\
		&\forall s,s'\in S, \forall \alpha\in A(s), \text{  } T(s,\alpha,s')\leq 1\label{eqn:tsas1}\\
		&\forall s\in\sqcup, p^o_s = 1\label{eqn:op0}
	\end{align}
	\begin{align}
		&\forall s\in S/\sqcup, \frac{\sum_{\alpha\in A(s)}\mu(s,\alpha)\sum_{s'\in S}T(s,\alpha,s')p^o_{s'}}{p^o_s} \leq 1\label{eqn:op} \\
		&\forall q\in Q, \sum_{q'\in Q}\sum_{\sigma\in\Sigma}P(q,\sigma,q')\leq 1 \label{eqn:psas}\\
		&\forall q\in D, p^t_q = 1 \label{eqn:NLP0}\\
		&\forall q\in Q\backslash D, \frac{\sum_{\sigma\in\Sigma}\sum_{q'\in Q}P(q,\sigma,q')p^t_{q'}}{p^t_q} \leq 1  \label{eqn:NLP}
	\end{align}
	
	Equation (\ref{eqn:cso}) and (\ref{eqn:task}) encode the CSO requirement and the task specification respectively. Intuitively, from (\ref{eqn:op0}) and (\ref{eqn:op}), it can be observed that $p^o_{s}$ denotes the upper bound of the probability to reach the sink set $\sqcup$ from a state $s$ in the PMDP $\mathcal{M}$ and thus $P_{CSO}\geq 1-p^o_{\hat{s}}\geq\gamma$ where $\hat{s}$ is the initial state of the PMDP $\mathcal{M}$.  From (\ref{eqn:NLP0}) and (\ref{eqn:NLP}), it can be seen that $p^t_{s}$ denotes the upper bound of the probability to reach the undesired set $D$, so from the specification $\phi_t$, we require $p^t_{\hat{q}}\leq\lambda$ where $\hat{q}$ is the initial state of the PSDES $PH$. Equation (\ref{eqn:scheduler}) denotes the requirement for the scheduler of the PMDP, observe that $F$ in (\ref{eqn:f}) is monotonic with regard to $\mu(s,\alpha)$, therefore the optimal solution from this NLP will achieve the equality, which satisfies the requirement that the probability of the scheduler's choice at each state should sum up to $1$. Equation (\ref{eqn:scheduler1}) and (\ref{eqn:tsas1}) requires that the probabilities should be bounded by $1$. As to (\ref{eqn:tsas}), $F$ is monotonic with respect to $V$, if $T(s,\alpha,s')$ is a posynomial of $V$, it will also be monotonic with respect to $V$. Then the equality will be achieved, which satisfies the requirement that the action $\alpha$ induces a distribution that sums up to $1$. The same argument applies to (\ref{eqn:psas}).
	
	\begin{thm}
		With the encoding (\ref{eqn:f}) - (\ref{eqn:NLP}), if the solution finds a well-defined scheduler and valuation $v$,  this solution then solves the Problem $1$ and respects Assumption \ref{asp:wu} and \ref{asp:topology}.
	\end{thm}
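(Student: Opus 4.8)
The plan is to fix an optimal solution --- a valuation $v^*$ together with a scheduler $\mu^*$ --- returned by the program (\ref{eqn:f})--(\ref{eqn:NLP}), and to establish three facts in turn: that $(v^*,\mu^*)$ induces a genuine (parametric) Markov chain respecting Assumptions \ref{asp:wu} and \ref{asp:topology}; that the program variables $p^o_s$ dominate the true probability of reaching the sink set $\sqcup$, which yields $P^I_{CSO}\geq\gamma$; and that the variables $p^t_q$ dominate the true probability of reaching $D$, which yields $\phi_t$. Reading off the insertion function $I$ from the values of $\mu^*$ on the insertion states $S_i$ then produces a solution to Problem 1.

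First I would establish well-definedness. By the monotonicity of $F$ with respect to the scheduler entries and the parameters already noted in the discussion preceding the theorem, any optimizer saturates the summation constraints (\ref{eqn:scheduler}), (\ref{eqn:tsas}) and (\ref{eqn:psas}) at every state that is not forced to zero by the frozen graph structure. Hence $\sum_{\alpha}\mu^*(s,\alpha)=1$ at each state, so $\mu^*$ is a bona fide memoryless scheduler; $\sum_{s'}T(s,\alpha,s')=1$, so the resolved object is a Markov chain; and $\sum_{q',\sigma}P(q,\sigma,q')\in\{0,1\}$, which is exactly the stochasticity demanded by Assumption \ref{asp:wu}. For Assumption \ref{asp:topology}, the strict positivity of the parameters together with the positive coefficients $c_k$ in (\ref{eqn:pos}) makes every nontrivial posynomial $P(q,\sigma,q')$ strictly positive for all valuations, while saturation of (\ref{eqn:psas}) in the presence of at least one further positive outgoing transition forces each such value strictly below $1$; since the support of the transition relation is unchanged, this is precisely Assumption \ref{asp:topology}. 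The observable transition guaranteed by Assumption \ref{asp:wu} also keeps the unobservable mass strictly below $1$, so the model remains one to which the construction of $\mathcal{M}$ and Lemma 1 apply.

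The central step is the reachability bound. Fix the DTMC induced by $\mu^*$ and $v^*$ and let $x_s$ be the exact probability of eventually reaching $\sqcup$ from $s$. It is standard \cite{baier2008principles} that $x$ is the least nonnegative fixed point of the Bellman operator $\Phi(y)_s=\sum_{\alpha}\mu^*(s,\alpha)\sum_{s'}T(s,\alpha,s')\,y_{s'}$ off $\sqcup$ with $y_s=1$ on $\sqcup$; equivalently $x=\lim_{k}\Phi^{k}(y^{(0)})$, where $y^{(0)}$ is $1$ on $\sqcup$ and $0$ elsewhere. Constraints (\ref{eqn:op0}) and (\ref{eqn:op}) --- the latter cleared of its positive denominator $p^o_s$ --- say exactly that $p^o$ is a supersolution, namely $p^o_s=1$ on $\sqcup$ and $p^o\geq\Phi(p^o)$. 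Since $p^o\geq y^{(0)}$ and $\Phi$ is monotone, induction gives $p^o\geq\Phi^{k}(y^{(0)})$ for every $k$, whence $p^o\geq x$ pointwise. Because reaching $\sqcup$ is precisely the event that the insertion is blocked, $x_{\hat{s}}=P(\mathcal{L}^I_{rs})$, so (\ref{eqn:cso}) gives $P(\mathcal{L}^I_{rs})\leq p^o_{\hat{s}}\leq 1-\gamma$ and therefore $P^I_{CSO}=1-P(\mathcal{L}^I_{rs})\geq\gamma$.

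An identical monotone-fixed-point argument applied to the parametric DTMC on $Q$ induced by $v^*$ shows, from (\ref{eqn:NLP0}) and (\ref{eqn:NLP}), that $p^t_q$ dominates the true probability of reaching $D$; constraint (\ref{eqn:task}) then yields $\mathrm{Pr}_{\hat{q}}(\diamondsuit D)\leq p^t_{\hat{q}}\leq\lambda$, i.e.\ $\phi_t$ holds. Combining the two bounds shows that $(v^*,\mu^*)$ solves Problem 1 while respecting Assumptions \ref{asp:wu} and \ref{asp:topology}. I expect the main obstacle to be this reachability bound: one must justify that the cleared constraint system is a genuine \emph{supersolution} of the correct Bellman operator --- which relies on the stochasticity established in the first step --- and that supersolutions dominate the least fixed point even for states out of which $\sqcup$ (or $D$) is unreachable. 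The positivity of the geometric-program variables is what makes the divisions in (\ref{eqn:op}) and (\ref{eqn:NLP}) legitimate and keeps the supersolution inequalities equivalent to their cleared forms.
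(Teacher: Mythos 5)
Your proposal is correct and follows the same route as the paper: monotonicity of $F$ forces saturation of the stochasticity constraints and strict positivity of the parameters (giving Assumptions \ref{asp:wu} and \ref{asp:topology}), and the variables $p^o_s$, $p^t_q$ upper-bound the respective reachability probabilities, so (\ref{eqn:cso}) and (\ref{eqn:task}) yield $P^I_{CSO}\geq\gamma$ and $\phi_t$. The only difference is that you explicitly prove the domination step via the supersolution/least-fixed-point argument for the Bellman operator, whereas the paper simply asserts that $p^o_{\hat{s}}\geq 1-P_{CSO}$ and $p^t_{\hat{q}}\geq P(\diamondsuit D)$ from the form of (\ref{eqn:op0})--(\ref{eqn:op}) and (\ref{eqn:NLP0})--(\ref{eqn:NLP}); your version supplies the justification the paper leaves implicit.
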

	
	\begin{proof}
		From (\ref{eqn:f}), it can be seen that to minimize $F$, the parameters $v_i\in V$ are not going to be zero, which implies that Assumption \ref{asp:topology} holds. Because of the monotonicity of $F$ with respect to $V$ and (\ref{eqn:psas}), $\sum_{q'\in Q}\sum_{\sigma\in\Sigma}P(q,\sigma,q')\in\{0,1\}$ in Assumption \ref{asp:wu} is assured.  
		
		Since $v_i\neq 0, \forall i\in[1,m]$, we have $P(q,\sigma,q')>0$ for any $q,q',\sigma\in\Sigma_o$, therefore $\sum_{q'\in Q}\sum_{\sigma\in\Sigma_{uo}}P(q,\sigma,q')<1$ is also assured. Thus, Assumption \ref{asp:wu} is satisfied. 
		
		As for the $P_{CSO}$ requirement, since $p^o_{\hat{s}}\geq 1-P_{CSO}$ and $p^o_{\hat{s}}\leq 1-\gamma$ as required in (\ref{eqn:cso}), it implies that $P_{CSO}\geq1-p^o_{\hat{s}}\geq\gamma$. Similarly for $\phi_t$ requirement, $\lambda\geq p^t_{\hat{q}}\geq P(\diamondsuit D)$.
	\end{proof}
	\begin{remark}
		Note that in this NLP encoding, we assume $PH$ has a unique initial state, while in our previous definition of PSDES, our initial condition is a distribution on the states. This can be easily converted to the unique initial state case by adding a dummy initial state $\hat{q}$ and the transition probability from $\hat{q}$ is according to the initial distribution $\pi_0$.
		
		Furthermore, in Problem $1$, we are looking for a deterministic insertion function $I$. But from the encoding, we get a probabilistic insertion function where the probability of inserting a string $\alpha$ is determined by $\mu(s,\alpha)$. From $F$ we know that $\mu(s,\alpha)>0,\forall s\in S, \alpha\in A(s) $, so if there are multiple insertion choices, our insertion strategy may choose any one of them with non-zero probability. But nevertheless, our resulting insertion strategy still satisfies the CSO requirement.  
		
		Also, note that this encoding is not complete. If we fail to  find a valid solution, it doesn't mean that a valid solution doesn't exist.
	\end{remark}
	
	\subsection{An Illustrative Example}
	Let's return to our motivating example as shown in Figure \ref{fig:PSDES} and its PMDP in Figure \ref{fig:PMDP}. The secret state are $8$ and $9$, and the state we would like to avoid is $5$. The opacity and reachability constraints are $P_{CSO}\leq\gamma=0.15$ and $\phi_t=P_{\leq\lambda}(\diamondsuit D)$ where $\lambda=0.3$. We encode our problem following (\ref{eqn:f})-(\ref{eqn:NLP}) and solve it using the optimization solver GGPLAB \cite{mutapcic2006ggplab}.
	The resulting parameters are $v_1=0.3501,v_2=0.3501,v_3=0.2998,v_4=0.5,v_5=0.5,v_6=0.5,v_7=0.5$. And the resulting insertion strategy is shown in Figure \ref{fig:solvedPMDP}. On the transition arrow from the insertion function station to a system state, the number after the inserted string denotes the probability that the insertion function chooses to select this insertion action. For example, from the initial state, event $a$ has probability $v_1$ to happen, and then the system transits to $((0,0),a)$ where, with probability $1-2*10^{-5}$, nothing is inserted and then the MDP transits to $(1,1)$ with probability $1$, which implies that both the real and intruder's state estimation are $1$. With our synthesized parameters and the insertion strategy, $P^I_{CSO}=0.15$ and $P(\diamondsuit D)=0.2507$. Note that in this particular example, for instance, the insertions $abc,cac,cba$ all make the state transits from $((0,0),b)$ to $(10,7)$ with probability $1$. Our synthesized strategy actually  assigns a total probability of $10^{-5}$ to choose among $abc,cac,cab$ while not specifying the exact probability to choose each individual string. 
	
	\begin{figure*}[ht]
		\centering
		\begin{tikzpicture}[shorten >=1pt,node distance=2cm,on grid,auto, bend angle=20, thick,scale=0.55, every node/.style={transform shape}] 
		\node[block,initial above] (s0)   {$(0,0)$}; 
		\node[double,block] (s1) [below left = 1.5 cm and 10 cm of s0] {$(0,0),b$}; 	
		\node[double,block] (s2) [below = 1.5 cm of s0]  {$(0,0),c$}; 
		\node[double,block] (s3) [below right= 1.5 cm and 8 cm of s0] {$(0,0),a$}; 
		
		\node[block] (s4) [below left= 4 cm and 10 cm of s1] {$(10,7)$}; 
		\node[block] (s5) [right = 4 cm  of s4] {$(5,7)$}; 
		\node[block] (s6) [right = 3 cm of s5] {$(2,7)$};
		\node[block] (s7) [right = 3 cm  of s6] {$(7,7)$};
		\node[block] (s8) [right = 4 cm of s7] {$(3,4)$};  
		\node[block] (s9) [right = 5 cm  of s8] {$(4,4)$}; 
		\node[block] (s10) [right = 3 cm  of s9]  {$(1,1)$}; 
		\node[block] (s11) [right = 3 cm  of s10]  {$(6,1)$}; 
		\node[block] (s12) [right = 3 cm  of s11] {$(2,1)$}; 
		\node[double,block,shade] (s13) [below left = 2 cm and 2 cm of s4] {$(10,7),a$};
		\node[double,block,shade] (s14) [right = 2cm of s13] {$(10,7),c$};
		\node[double,block] (s15) [right = 1.8cm of s14] {$(5,7),a$};
		\node[double,block,shade] (s16) [right = 1.8cm of s15] {$(5,7),c$};
		\node[double,block,shade] (s17) [right = 1.8cm of s16] {$(2,7),a$};
		\node[double,block] (s18) [right = 1.8cm of s17] {$(2,7),c$};
		\node[double,block,shade] (s19) [right = 1.8cm of s18] {$(7,7),a$};
		\node[double,block,shade] (s20) [right = 1.8cm of s19] {$(7,7),c$};
		
		\node[double,block,shade] (s21) [right = 1.8cm of s20] {$(3,4),a$};
		\node[double,block] (s22) [right = 1.8cm of s21] {$(3,4),b$};
		\node[double,block] (s23) [right = 1.8cm of s22] {$(4,4),b$};
		\node[double,block] (s24) [right = 1.8cm of s23] {$(4,4),a$};
		
		\node[double,block] (s25) [right = 3cm of s24] {$(1,1),b$};
		\node[double,block] (s26) [right = 3cm of s25] {$(6,1),b$};
		\node[double,block,shade] (s27) [right = 3cm of s26] {$(2,1),b$};
		
		\node[block] (s28) [below = 4 cm of s15] {$(6,8)$}; 
		\node[block] (s29) [below = 4cm of s18] {$(3,9)$}; 
		\node[block] (s30) [below = 4cm of s22] {$(10,5)$};
		\node[block] (s31) [below = 4cm of s23] {$(5,5)$};
		\node[block] (s32) [below = 4cm of s24] {$(2,2)$};  
		\node[block] (s33) [below = 4cm of s25] {$(10,2)$}; 
		
		\node[double,block] (s34) [below = of s28] {$(6,8),b$};
		\node[double,block] (s35) [below = of s29] {$(3,9),b$};
		\node[double,block,shade] (s36) [left= of s30] {$(10,5),a$};
		\node[double,block] (s37) [below = of s31] {$(5,5),a$};
		\node[double,block] (s38) [below = of s32] {$(2,2),c$};
		\node[double,block,shade] (s39) [right = of s33] {$(10,2),c$};
		
		\node[block] (s40) [left = of s37] {$(6,6)$};
		\node[block] (s41) [below = of s38] {$(3,3)$}; 
		
		\node[double,block] (s42) [left = of s40] {$(6,6),b$};
		\node[double,block] (s43) [left = of s41] {$(3,3),b$}; 
		
		\node[block] (s44) [below = of s42] {$(10,10)$}; 		
		
		\path[->]
		(s0) edge node [sloped,above] {$b,v_3$} (s1)  
		(s0) edge node {$c,v_2$} (s2)  
		(s0) edge node {$a,v_1$} (s3)
		
		(s1) edge node [sloped,above] {$(abc,cac,cba),10^{-5}$} (s4)  
		(s1) edge node [sloped,above]{$c,0.5-10^{-5}$} (s5)  
		(s1) edge node [sloped,above]{$a,0.5-10^{-5}$} (s6) 
		(s1) edge node [sloped,above]{$\epsilon,10^{-5}$} (s7)  
		
		(s2) edge node [sloped,above] {$(ab,ca),10^{-5}$} (s8)  
		(s2) edge node {$\epsilon,1-10^{-5}$} (s9) 
		
		(s3) edge node [sloped,above] {$\epsilon,1-2*10^{-5}$} (s10)  
		(s3) edge node [sloped,above] {$cb,10^{-5}$} (s11)  
		(s3) edge node [sloped,above] {$c,10^{-5}$} (s12)
		
		(s4) edge node [below,left]{$a,v_6$} (s13)  
		(s4) edge node {$c,v_7$} (s14)
		
		(s5) edge node [below,left] {$a,v_6$} (s15) 
		(s5) edge node {$c,v_7$} (s16)
		
		(s6) edge node [below,left] {$a,v_6$} (s17)  
		(s6) edge node {$c,v_7$} (s18)
		
		(s7) edge node [below,left] {$a,v_6$} (s19)  
		(s7) edge node {$c,v_7$} (s20)
		
		(s8) edge node {$a,v_4$} (s21) 
		(s8) edge node {$b,v_5$} (s22)  
		
		(s9) edge node [below,left] {$b,v_5$} (s23)  
		(s9) edge node {$a,v_4$} (s24) 
		
		(s10) edge node {$b$} (s25)  
		
		(s11) edge node {$b$} (s26)  
		
		(s12) edge node {$b$} (s27) 
		
		(s15) edge node  {$\epsilon$} (s28) 
		
		(s18) edge node  {$\epsilon$} (s29)  
		
		(s22) edge node [left] {$\epsilon$} (s30) 
		
		(s23) edge node [sloped,above] {$(ac,ba),10^{-5}$} (s30) 
		(s23) edge node [sloped,above] {$\epsilon,1-10^{-5}$} (s31)  
		
		(s24) edge node {$\epsilon$} (s32) 
		
		(s25) edge node [above,sloped] {$\epsilon,1-10^{-5}$} (s32)  
		(s25) edge node [above,sloped] {$bc,10^{-5}$} (s33)  
		
		(s26) edge node {$\epsilon$} (s33) 
		
		(s28) edge node {$b$} (s34)  
		
		(s29) edge node {$b$} (s35) 
		
		(s30) edge node {$a$} (s36) 
		
		(s31) edge node {$a$} (s37)  
		
		(s32) edge node {$c$} (s38) 
		
		(s33) edge node {$c$} (s39) 
		
		(s34) edge node {$\epsilon$} (s44) 
		
		(s35) edge node {$\epsilon$} (s44)  
		
		(s37) edge node {$\epsilon$} (s40) 
		
		(s38) edge node {$\epsilon$} (s41)  
		
		(s40) edge node {$b$} (s42)
		
		(s41) edge node {$b$} (s43) 
		
		(s42) edge node [left]{$\epsilon$} (s44) 
		
		(s43) edge node {$\epsilon$} (s44)
		
		; 
		\end{tikzpicture} 
		\caption{$\mathcal{M}$ with synthesized parameters and insertion strategies, where $v_1=0.3501,v_2=0.3501,v_3=0.2998,v_4=0.5,v_5=0.5,v_6=0.5,v_7=0.5$}
		\label{fig:solvedPMDP}
	\end{figure*}
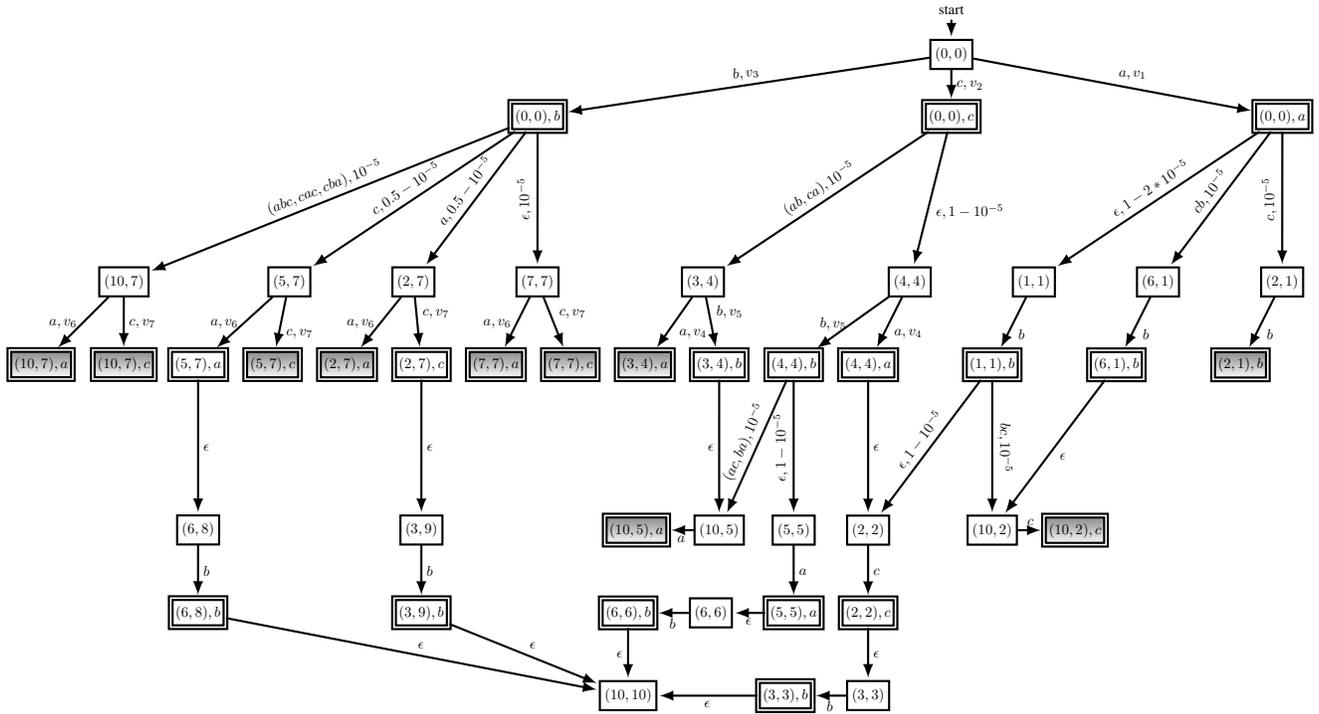
	
	\section{Conclusion}\label{sec:conclusion}
	In this paper, we solved an insertion function and parameter co-synthesis problem on a parametric model, such that both the opacity requirement and task specification can be enforced. The problem was encoded to a nonlinear program and solved. We showed that the solution of this program is a valid one if it respects all the constraints. Future work will consider distributed co-synthesis framework with multiple intruders.
	
	\bibliographystyle{IEEEtran}
	\bibliography{root}
\end{document}